\documentclass{llncs}

\usepackage[T1]{fontenc}
\usepackage[utf8]{inputenc}
\usepackage[dvips]{graphicx}
\usepackage{moreverb}
\usepackage{psfrag}
\usepackage{amsmath}
\usepackage{amsfonts}
\usepackage{amssymb}
\usepackage{hyperref}
\usepackage{authblk}

\pagestyle{plain}

\usepackage{color}
\usepackage{framed}

\ifx\suppress\undefined
\newcommand{\TODO}[1]{
\typeout{WARNING!!! there is still a TODO left}
\marginpar{\textbf{!TODO: }\emph{#1}}
}
\else
\newcommand{\TODO}[1]{}
\fi

\ifx\suppress\undefined
\newenvironment{todo}[1]{\noindent\rule{.3\textwidth}{1pt}\TODO{#1}\\}{\\\rule{.3\textwidth}{1pt}}
\else

\fi

\ifx\suppress\undefined
\newcommand{\NOTE}[1]{
\typeout{WARNING!!! there are still DRAFT NOTES left}
\marginpar{!DRAFT}\emph{\textbf{DRAFT NOTES:} #1}
}
\else
\newcommand{\NOTE}[1]{}
\fi

\newtheorem{defi}{Definition}
\newtheorem{theo}{Theorem}
\newtheorem{lemm}{Lemma}
\newtheorem{coro}{Corollary}

\title{Parameterized~Complexity~of Weak~Odd~Domination~Problems}
\author{David Cattan\'eo\inst{2}\and Simon Perdrix \inst{1}\inst{2}}
\institute{CNRS \and LIG, University of Grenoble, France}
\date{}

\begin{document}

\maketitle

\begin{abstract}
Given a graph $G=(V,E)$, a subset $B\subseteq V$ of vertices is a \emph{weak odd dominated} (WOD) set if there  exists $D \subseteq V {\setminus} B$ such that every vertex in $B$ has an odd number of neighbours in $D$. $\kappa(G)$ denotes the size of the largest WOD set, and $\kappa'(G)$ the size of the smallest non-WOD set. 
The maximum of $\kappa(G)$ and $|V|-\kappa'(G)$, denoted $\kappa_Q(G)$, plays a crucial role in quantum cryptography. In particular deciding, given a  graph $G$ and $k>0$, whether $\kappa_Q(G)\le k$ is of practical interest in the design of graph-based quantum secret sharing schemes. The decision problems associated with the quantities $\kappa$, $\kappa'$ and $\kappa_Q$ are known to be NP-Complete.  
In this paper, we consider the approximation of these quantities and the parameterized complexity of the corresponding problems. We mainly prove the fixed-parameter intractability (\textup W$[1]$-hardness) of these problems. Regarding the approximation, we show that $\kappa_Q$, $\kappa$ and $\kappa'$ admit a constant factor approximation algorithm, and that $\kappa$ and $\kappa'$ have no polynomial approximation scheme unless \textup{P=NP}. 
\end{abstract}

\section{Introduction}

The odd domination problem is a domination type problem on a graph $G=(V,E)$ that consists of finding a set $D$ of vertices such that each vertex has an odd number of neighbours in $D$, equivalently each vertex is in the close odd neighbourhood of $D$ defined as $Odd[D]=\{u \in V, |N[u]\cap D| = 1 \bmod 2\}$. 
The odd domination falls into the general framework of $(\sigma,\rho)$-domination \cite{mod2domination,complexDomination}. The parameterized complexity of these problems has been studied, in particular in the parity cases \cite{sigmarho,W2}. 
Weak odd domination is a variation of odd domination, which does not fall into the general framework of $(\sigma,\rho)$-domination. Given a graph $G=(V,E)$, a \emph{Weak Odd Dominated} (WOD) set is a set $B\subseteq V$ such that there exists $D \subseteq V\setminus B$ with $B\subseteq Odd(D):=\{v\in V\setminus D, |N(v)\cap D|=1\bmod 2\}$, in other words, every vertex in $B$ has an odd number of neighbours in $D$. The Lemma 1 in \cite{weakodddomination} gives a good characterization of non-WOD sets: $B\subseteq V$ is not WOD if and only if $\exists C \subseteq B$ such that $|C|= 1 \bmod 2$ and $Odd(C) \subseteq B$. Since a subset of a WOD set is WOD and a superset of a non-WOD is non-WOD, we focus on the largest WOD set and the smallest non-WOD set:

\begin{defi}
Given a graph $G=(V,E)$,\\ \centerline{~~$\kappa(G)= \displaystyle{\underset{B\,  \textup{WOD}}{\max}}|B|=\displaystyle{\underset{D\subseteq V}{\max}}|Odd(D)|$\hfill $\kappa'(G)=\displaystyle{\underset{B\, \neg\textup{WOD}}{\hspace{-0.5cm}\min}}\hspace{-0.5cm}|B|=\displaystyle{\underset{C\subseteq V,|C|= 1 \bmod 2}{\hspace{-1.5cm}\min}}\hspace{-1.5cm}|C{\cup} Odd(C)|$~~}
\end{defi}

Weak odd dominated sets have a simple interpretation in a variant of the sigma-game with \emph{fragile} bulbs: given a graph $G$, to each vertex is attached a bulb which has three possible states: `on', `off', and `broken'; when one plays on a  bulb, it makes this bulb `broken' and flips the states `on'/`off' of its neighbours. In the initial configuration all bulbs are `off'. 
The size $\kappa(G)$ of the largest WOD set corresponds to the largest number of (unbroken) `on' bulbs one can obtain. Indeed, when one plays a set $D$ of bulbs, $Odd(D)$ is the set of `on' bulbs.\\
\indent Weak odd domination is strongly related to graph-based quantum secret sharing protocols, defined in \cite{quantSecret}. These protocols are represented by graphs in which every vertex represents a player. It has been proved in \cite{weakodddomination}, that for a quantum secret sharing protocol based on a graph $G$ of order $n$, $\kappa_Q(G)$ defined as $\max(\kappa(G),n{-}\kappa'(G))$ is the minimal threshold such that any set of more than $\kappa_Q(G)$ players can recover the secret. Graphs with a small quantum threshold (i.e.~$\kappa_Q(G){\le} 0.811n$ for a graph $G$ of order $n$) have been proved to exist using non constructive methods \cite{weakodddomination}. In fact, a random graph has a small $\kappa_Q$  with high probability (see  \cite{weakodddomination} for details). Thus, deciding whether a graph has a small threshold is crucial  for the generation of good graph-based quantum secret sharing protocols. Unfortunately this problem has been proved to be NP-complete~\cite{weakodddomination}. \\
\indent Since the decision problem associated with $\kappa_Q$, as well as those associated with $\kappa$ and $\kappa'$ are NP-complete \cite{weakodddomination}, we consider two approaches to tackle the hardness of these problems:  parameterized algorithms and approximations.

\noindent {\bf Parameterized Complexity.} Several NP-Complete problems, like deciding whether a graph of order $n$ has a vertex cover of size at most $k$, have been proved to be fixed parameter tractable, i.e. they can be solved in time $f(k).n^{O(1)}$ for some computable function $f$. The parameterized complexity hierarchy \cite{monograph}:  $\textup{FPT} \subseteq W[1] \subseteq W[2] \subseteq \ldots \subseteq XP$ is, roughly speaking, a way to distinguish the problems which are fixed parameter tractable (\textup{FPT}) from those which are not.

\noindent {\bf Approximation.} An optimization problem belongs to \textup{APX} if it admits a constant factor approximation  algorithm. It admits a polynomial approximation scheme if for any $\epsilon>0$ it admits a $(1+\epsilon)$-approximation \cite{approxbook}. \\
\indent In this paper, the approximation and parameterized complexity of weak odd domination problems are explored.
Section \ref{sec:bounds} is dedicated to bounds on the weak odd domination. We prove a lower bound on $\kappa$ and an upper bound on $\kappa'$ using probabilistic methods. We also prove a strong duality property between $\kappa$ and $\kappa'$. 
In section \ref{sec:PC} we define the parameterization of the weak odd domination problems, then we prove using cyclic reductions that \textsc{WOD Set Of Size At Least} $n{-}k$, \textsc{Non-WOD Set Of Size At Most} $k$ and \textsc{Quantum Threshold At Least} $n{-}k$ are equivalent to \textsc{Oddset}, which has been proved to be hard for \textup W$[1]$ and to belong to \textup W$[2]$ in \cite{oddset}, even for bipartite graphs, thus they have no \textup{FPT}-algorithms unless the parameterized hierarchy collapses with \textup W$[1]=\textup{FPT}$. 
Finally, in section \ref{sec:approx} we define the optimisation version of the weak odd domination problems and then prove that \textsc{Max non-Accessible Set} is in $\textup{APX}$ and that \textsc{Max WOD Set} and \textsc{Min non-WOD Set} are complete for $\textup{APX}$, by reduction from \textsc{Max 3-SAT B} which is hard for $\textup{APX}$ \cite{papa}, thus \textsc{Max WOD Set} and \textsc{Min non-WOD Set} have no polynomial approximation scheme unless \textup{P=NP}.

\section{Bounds on WOD sets}\label{sec:bounds}

In this section we improve the known bounds on the  largest WOD and smallest non-WOD sets in a graph. These improved bounds are essential for the choice of the parameterization of the corresponding problems. 
The largest WOD set of a graph $G$ of order $n$ and degree $\Delta$ satisfies $\Delta{\le} \kappa(G){\le}  \frac{n\Delta}{\Delta {+} 1}$ \cite{weakodddomination}. The bound $\kappa(G){\ge} \Delta$ is coming from the simple fact that any vertex is oddly dominating its neighbourhood. We improve this bound 
using  probabilistic methods.

\begin{lemm}\label{lem:bound}
For any graph $G$ of order $n$ and minimal degree $\delta>0$,\\ \centerline{$\kappa(G)\ge \left(\frac12 - \frac{1+\log(2\delta)}{4\delta}\right)n$}
Moreover, $\kappa(G)\ge \frac n 4$ when $\delta\ge 1$, and  $\kappa(G) \ge  \frac {8n}{27}$ when $\delta \ge 2$.
\end{lemm}

\begin{proof}The proof consists in evaluating the expected size of the odd neighbourhood of a randomly chosen set of vertices. 
Given $q\in [0.5,1]$, let $D$ be a subset of vertices created by choosing each $v\in V(G)$ independently with probability $1{-}q$. The expected size of $D$ is $(1{-}q)n$. For every $v$, the probability that $v\in Even(D):=\{u\in V\setminus D, |N(u)\cap D] = 0\bmod 2\}$, is $P_0(v) =q.\sum_{k=0}^{\delta(v)/2}{\delta(v)\choose 2k} (1{-}q)^{2k} q^{\delta(v){-}2k}$, and  the probability that $v\in Odd(D)$ is $P_1(v) = q.\sum_{k=0}^{\delta(v)/2}{\delta(v)\choose 2k{+}1} (1{-}q)^{2k{+}1} q^{\delta(v){-}2k{-}1}$. Notice that $v$ is not in $D$ with probability $P_0(v){+}P_1(v) = q$, moreover $P_0(v) {-} P_1(v) = q \sum_{k=0}^{\delta(v)} {\delta(v)\choose k} (q{-}1)^{k} q^{\delta(v){-}k} = q(2q{-}1)^{\delta(v)}$. As a consequence, $P_1(v) = \frac12(q{-}q(2q{-}1)^{\delta(v)})$, and the expected size of $Odd(D)$ is $E[|Odd(D)|]=\sum_{v\in V(G)}\frac 1 2(q{-}q(2q{-}1)^{\delta(v)})$. Let $x = 2q{-}1$ and $\delta$ the minimal degree of $G$, $E[|Odd(D)|] \ge \frac {n} 4(x{+}1)(1{-}x^{\delta})$, which is maximal for $x=0$ when $\delta =1$, so $E[|Odd(D)|]\ge \frac n 4$. Thus there exists $D\subseteq V(G)$ such that $|Odd(D)|\ge  \frac n 4$ so $\kappa(G)\ge \frac n 4$.   When $\delta =2$,  $\frac {n} 4(x{+}1)(1{-}x^{2})$ is maximal for $x=\frac13$, so $E[|Odd(D)|]\ge\frac{8n}{27}$. In the general case, $E[|Odd(D)|]\ge \frac n 4(1{+}x{-}2x^\delta)$, which is maximal for $x = (2\delta)^{{-}\frac 1 {\delta{-}1}}$. So $E[|Odd(D)|] \ge  \frac n 4(1{+}(2\delta)^{{-}\frac 1 {\delta{-}1}}{-}2(2\delta)^{{-}\frac \delta {\delta{-}1}}) = \frac n 4(1{+}\frac{\delta {-}1}{\delta}e^{{-}\frac{\log(2\delta)}{\delta {-}1}})\ge \frac n 4 (1{+} \frac{\delta {-}1}{\delta}(1{-}\frac{\log(2\delta)}{\delta {-}1})) = n(\frac 12 {-}\frac {1{+}\log(2\delta)}{4\delta})$. \hfill $\Box$
\end{proof}

The bound of lemma \ref{lem:bound} is not known to be tight. For the graph $C_5^k$ of order $n=5k$ composed of the disjoint union of $k$ $C_5$, $\kappa(C_5^k)= 2n/5$. Regarding connected graphs,  the largest WOD set  of a comb graph of order $2k$ (a path $P_k$ with a pending vertex on every vertex of the path) is of size $k$.  We conjecture that for any connected graph $G$, $\kappa(G) \ge \lfloor n/2\rfloor$.

Most of the graphs of order $n$ have no WOD set larger than $0.811n$. Indeed, theorem 8 in \cite{weakodddomination} implies that a random graph $G(n,1/2)$ (graph of order $n$ where every possible pair of vertices has an edge with probability 1/2), $Pr(\kappa(G(n,1/2))\le 0.811n)\ge 1-\frac1n$.

Similarly to the largest WOD set, the smallest non-WOD set of a graph $G$ of order $n$ and minimal degree $\delta$ satisfies $\frac{n}{\delta {+} 1} \le \kappa'(G)\le \delta{+}1$ \cite{weakodddomination}. The bound $\kappa'(G)\le \delta + 1$ is coming from the  fact that any vertex together with its neighbourhood is not a WOD set. 
Notice that a similar probabilistic technique as of the proof of Lemma \ref{lem:bound} fails to improve this bound: the expected size of $D\cup Odd(D)$ for a randomly chosen subset $D$ does not produce an upper bound on the smallest non-WOD set because of the additional constraint that $D$ must be of odd size. Instead, we improve the upper bound for the smallest non-WOD set by strengthening the duality property $\kappa'(G){+}\kappa(\overline G){\ge} n$  proved in \cite{weakodddomination} as follows: 

\begin{lemm}\label{lem:strdual}
For any graph $G$ of order $n$, 
$n-\kappa(\overline G)\le \kappa'(G)\le n-\frac{\kappa(\overline G)}2$.
\end{lemm}

\begin{proof} The proof consists in showing that for any graph $G$, $\kappa'(\overline G) {\le} n{-}\frac{\kappa(G)}2$. To this end, first we show that $\exists D{\subseteq} V(G)$ s.t.~$|D|{=}1\bmod 2$ and $|Odd(D)|{\ge} \frac{\kappa(G)}2$. Indeed let  $D{\subseteq}V(G)$ be a non empty set s.t.~$|Odd(D)| {=} \kappa(G)$. If $|D|{=}1 \bmod 2$ then  we obviously have $|Odd(D)|{\ge} \frac{\kappa(G)}2$. Othewise, if $D$ is of even size then $\forall v{\in} D$,  $|N(v)|{+}|Odd(D{\setminus} \{v\})|{\ge} |Odd(D)|{=}\kappa(G)$. So either $\{v\}$ or $D{\setminus} \{v\}$, which are both of odd size, has an odd neighbourhood larger than $\frac{\kappa(G)}2$. 
Thus, $\exists C{\subseteq} V(G)$ s.t. $|C| {=}1\bmod2$ and $|Odd(C)|{\ge} \frac{\kappa(G)}2$. Since $|C|{=}1\bmod 2$ implies that $\forall v\notin C$, $v{\in} Odd(C) \Leftrightarrow v{\notin} Odd_{\overline G} (C)$ (where  $Odd_{\overline G} (C)$ is the odd neighbourhood of $C$ in $\overline G$), $|C{\cup} Odd_{\overline G}(C)|{\le} n{-}\frac{\kappa(G)}2$. Thus $\kappa'(\overline G){\le} |C{\cup} Odd_{\overline G}(C)|{\le} n{-}\frac{\kappa(G)}2$. $\hfill \Box$ \end{proof}

\begin{coro}\label{cor:bound}
For any graph $G$ of order $n$ and degree $\Delta<n-1$, \\ \centerline{$\kappa'(G)\le \frac {7n}8~~~~~\text{and}~~~~~\kappa'(G)\le \left(\frac34 + \frac{1+\log(2(n-\Delta-1)}{8(n-\Delta-1)}\right)n$}

\end{coro}

The restriction $\Delta<n-1$ in Corollary \ref{cor:bound} is crucial since  for any $n$, $\kappa'(K_n)=n$, where $K_n$ is the complete graph of order $n$. Similarly, the condition $\delta>0$ in lemma \ref{lem:bound} is necessary since $\kappa(\overline {K_n}) = 0$. We consider the class of graphs which do not satisfy these conditions, i.e. graphs having a universal vertex or an isolated vertex, and show that such graphs satisfy a stronger duality property:

\begin{lemm}\label{lem:stgdual}
For any graph $G$ of order $n$ with a universal or isolated vertex,\\ \centerline{$\kappa'(G)+\kappa(\overline G)=n$}
\end{lemm}

\vspace{-0,5cm}

\begin{proof}
If $G$ has an isolated vertex then $\kappa'(G){=}1$ and $\kappa(\overline G){=}n{-}1$. Otherwise, let $u$ be a universal vertex in $G$.  Let $ D{\subseteq} V(G)$ s.t. $|Odd_{\overline G} (D)|{=}\kappa(\overline G)$. Notice that $u$ is isolated in $\overline G$, so $|Odd_{\overline G}(D{\oplus} \{u\})| {=} |Odd_{\overline G}(D)|{=}\kappa(\overline G)$, where ${\oplus} $ denotes the symmetric difference. Since either $D$ or $D{\oplus} \{u\}$ is of odd size, $\exists C{\in} \{D, D{\oplus} \{u\}\}$ s.t. $|C|{=}1\bmod 2$ and $|Odd_{\overline G} (C)|{=}\kappa(\overline G)$. Moreover $|Odd_{G}(C)| {=} n{-}\kappa(\overline G)$,  so $\kappa'(G) \le n{-}\kappa(\overline G)$, which implies  $\kappa'(G) {=} n{-}\kappa(\overline G)$ since $\kappa'(G) {\ge}  n{-}\kappa(\overline G)$ for any graph. \hfill $\Box$
\end{proof}

The strong duality property gives a way to get ride of the universal and isolated vertices in the computation of $\kappa(G)$ and $\kappa(\overline G)$. For instance, if $G$ has a universal vertex $u$, $\kappa'(G)= n-\kappa(\overline G) = n-\kappa(\overline G\setminus u)$, since $u$ is isolated in $\overline G$. 

\section{Parameterized Complexity}\label{sec:PC}

\subsection{Parameterization of weak odd domination problems}

The results of the previous  section imply that when parameterised by the size of the WOD set, the largest WOD set problem is in \textup{FPT}. More precisely, given a graph $G$ and a parameter $k$, deciding  whether $\kappa(G)\ge k$ is  fixed parameter tractable using the following algorithm {\verb@Kappa@}$(G,k)$:

\vspace{0.3cm}

\noindent$\bullet$ Remove all isolated vertices. Let $n$ be the order of the resulting graph $G'$
\\$\bullet$ If $k\le \frac n4$ then `true'
\\$\bullet$ Else if  $\forall D\subseteq V(G')$,  $|Odd(D)| < k$ then `false' else `true'. 

\vspace{0.3cm}

The first step of the algorithm is correct since for any isolated vertex $u$, $\kappa(G)=\kappa(G\setminus \{u\})$. The complexity of the algorithm is $O^*(2^n)$ and since in the worst case $k=\frac{n}{4}$ the complexity in fonction of $k$ is $O^*(2^{4k})$, so the problem WOD of size at least $k$ is \textup{FPT}.

Regarding the smallest non-WOD set,  given a graph $G$ of order $n$ and a parameter $k$,  deciding whether $\kappa'(G)\le n{-}k$ is  fixed parameter tractable  using the following algorithm {\verb@Kappa'@}$(G,k)$:

\vspace{0.3cm}

\noindent$\bullet$ If $G$ has a universal vertex $u$ then {\verb@Kappa@}$(\overline G\setminus u,k)$
\\ $\bullet$ Else if $k\le n/8$ then `true'
\\   $\bullet$ ~~~~~~Else if  $\forall D$ of odd size $|D\cup Odd(D)| > n- k$ then `false' else  `true'.

\vspace{0.3cm}

The correctness of the first step is based on Lemma \ref{lem:stgdual}: if $G$ has a universal vertex $u$, $\kappa'(G)= n-\kappa(\overline G) = n- \kappa(\overline G\setminus u)$. 

\vspace{0.3cm}

The fixed parameter tractability of the problems WOD of size at least $k$ and non-WOD of size at most $n{-}k$ is not relevant because only based on the existence of  bounds for $\kappa(G)$ and $\kappa'(G)$. As a consequence, we  focus in the rest of this paper on the dual parameterization of these problems: 

\vspace{0.3cm}

\indent {\begin{tabular}{l}
 \textsc{WOD Set Of Size At Least} $n{-}k$~~\\
   input: A graph $G$ of order $n$\\
  parameter: An integer $k$\\
  question: Is $\kappa(G) \geq n{-}k$?\\
  \end{tabular}}
  
  \vspace{0.3cm}

  \indent {\begin{tabular}{l}
\textsc{non-WOD Set Of Size At Most} $k$\\
input: A graph $G$ of order $n$\\
 parameter: An integer $k$\\
 question: Is $\kappa'(G) \le k$?\\
  \end{tabular}}

\vspace{0.3cm}
Concerning the quantum threshold problem, given a graph $G$ of order $n$ and a parameter $k$ deciding whether $\kappa_Q(G){=} \max (\kappa(G), n{-} \kappa'(G)){\ge} k$ is in \textup{FPT} 
 since $\kappa_Q(G){\ge} 0.506n$ \cite{protocols}. As a consequence we consider its dual parameterization: 

\vspace{0.3cm}

  \indent {\begin{tabular}{l}
\textsc{Quantum Threshold At Least} $n{-}k$\\
 input: A graph $G$ of order $n$\\
 parameter: An integer $k$\\
 question: Is $\kappa_Q(G) \geq n{-}k$?\\
  \end{tabular}}
%
%
\subsection{Fixed parameter intractability}

In this section we show that the three problems  \textsc{WOD Set Of Size At Least} $n{-}k$, \textsc{non-WOD Set Of Size At Most} $k$, and \textsc{Quantum Threshold At Least} $n{-}k$ are all hard for \textup W$[1]$ and belong to the class \textup W$[2]$. \textup W$[1]$-hardness implies the fixed parameter intractability of these problems (unless \textup{FPT} = \textup W$[1]$). \textup W$[2]$ membership and \textup W$[1]$-hardness of these problem are proved using reductions from the \textsc{Oddset} problem:

\vspace{0.3cm}

\noindent ~~\textsc{Oddset of size at most} $k$

\noindent ~~input: A bipartite graph $G=(R \cup B,E)$

\noindent ~~parameter: An integer $k$

\noindent ~~question: Is there a subset $R'\subseteq R$, $|R'|\le k$ such that  $B = Odd(R')$?

\vspace{0.3cm}

This problem is known  \cite{oddset} to be \textup W$[1]$-hard 
and \textup W$[2]$. We prove the following circular reductions,  where $B\to A$ stands for $A$ is \textup{FPT}-reducible to $B$:\\

~~{\begin{psfrags}
\psfrag{o}[c]{\textsc{Oddset}}
\psfrag{q}[c]{\textsc{~~Largest WOD Set}}
\psfrag{S}[c]{\textsc{Smallest}}
\psfrag{N}[c]{\textsc{non-WOD Set}}
\psfrag{L}[c]{\textsc{Quantum Threshold}}
\psfrag{t1}[l]{\footnotesize [Theorem \ref{thm:WOD_ODD}]}
\psfrag{t2}[l]{\footnotesize [Theorem \ref{thm:nonWOD_ODD}]}
\psfrag{t3}[r]{\footnotesize [Theorem \ref{thm:quantumt}]}
\psfrag{t4}[r]{\footnotesize [Theorem \ref{thm:qodd}]}
\centerline{\includegraphics[width=11cm,height=4cm]{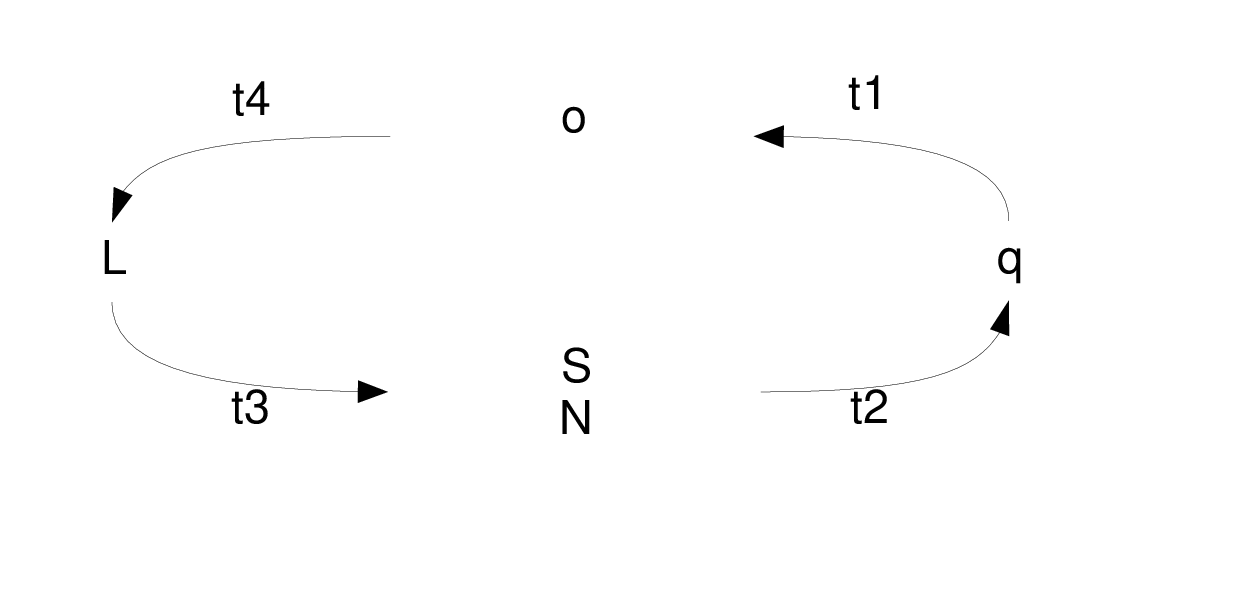}}
\end{psfrags}}

As a consequence, all these problems are \textup{FPT}-equivalent to \textsc{Oddset}. Moreover we show that they remain \textup{FPT}-equivalent to $\textsc{Oddset}$ when  restricted to bipartite graphs. 

\subsubsection{3.2.1 Largest WOD set problem}\label{sec:WOD}

\begin{theo}\label{thm:WOD_ODD}
\textsc{WOD Set Of Size At Least} $n{-}k$ is harder than \textsc{Oddset} by an \textup{FPT}-reduction.
\end{theo}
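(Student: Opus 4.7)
The plan is to reduce \textsc{Oddset}$(G=(R\cup B,E),k)$ to \textsc{WOD Set Of Size At Least} $n'-k'$ with parameter $k'=k+1$. The gadget $G'$ is obtained from $G$ by (i) adding a fresh vertex $c$ adjacent to every $r\in R$, whose role is to place each unselected $R$-vertex into $Odd(C)$, and (ii) for every $b\in B$ adding a bundle of $t:=k+1$ twin vertices $b_1,\dots,b_t$ sharing the $R$-neighborhood of $b$, whose role is to impose an $\Omega(t)$ penalty whenever some $b\in B$ would fail the \textsc{Oddset} parity. I may assume $|R|\ge k+2$, since otherwise \textsc{Oddset} can be brute-forced in time $2^{|R|}\!=O(2^{k+1}n)$ and the reduction outputs a trivially equivalent \textsc{WOD} instance.

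The forward direction is immediate: given an \textsc{Oddset} solution $R'\subseteq R$ of size $\le k$, the set $C:=R'\cup\{c\}$ has size at most $k+1$; each $r\in R\setminus R'$ has $c$ as its unique neighbor in $C$ (since $R$ is independent in $G$), and each $b\in B$ together with each of its twins has $|N_G(b)\cap R'|$ neighbors in $C$, which is odd by hypothesis. Thus $Odd(C)=V(G')\setminus C$ and $|Odd(C)|\ge n'-(k+1)$.

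For the backward direction, take any $C\subseteq V(G')$ with $|Odd(C)|\ge n'-(k+1)$ and write $\mathrm{bad}:=C\cup Even(C)$, so $|\mathrm{bad}|\le k+1$; let $C_R:=C\cap R$. The key observation is that since each twin $b_i$ has only $R$-neighbors and is non-adjacent to $c$, to $B$, and to $T$, both $b$ and every $b_i$ lie in $Odd(C)$ iff they are outside $C$ and $|N_G(b)\cap C_R|$ is odd. If some $b$ had $|N_G(b)\cap C_R|$ even, the entire bundle $\{b,b_1,\dots,b_t\}$ of size $t+1=k+2$ would belong to $\mathrm{bad}$, contradicting $|\mathrm{bad}|\le k+1$; hence $C_R$ meets the \textsc{Oddset} parity condition.

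The main obstacle, which I expect to require the most care, is upgrading the automatic bound $|C_R|\le |C|\le |\mathrm{bad}|\le k+1$ to the sharper $|C_R|\le k$. If $|C_R|$ were $k+1$, then $|C|=k+1$ forces $C=C_R$, hence $c\notin C$ and no $B$-vertex or twin is in $C$; since $R$ is independent, every $r\in R$ would then have $0$ neighbors in $C$ and thus be bad. This contradicts $|R|\ge k+2$, showing $|C_R|\le k$ and giving the desired \textsc{Oddset} solution.
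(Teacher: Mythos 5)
Your reduction is correct, and its core is the same as the paper's: a vertex $c$ adjacent to all of $R$ (so that unselected $R$-vertices fall into $Odd(C)$ via $c$), together with $k+2$ vertices sharing each $b\in B$'s $R$-neighbourhood (so that any parity failure costs more than the budget $k'=k+1$), with parameter $k'=k+1$. The one genuine difference is how you certify $|C\cap R|\le k$ rather than just $\le k+1$ in the backward direction. The paper attaches an extra independent set $F$ of $k+2$ pendant vertices to $c$, which forces $c\in C$ and thereby consumes one unit of the budget, leaving only $k$ for $C\cap R$. You dispense with $F$ entirely and instead preprocess: assume $|R|\ge k+2$ (brute-forcing \textsc{Oddset} in time $O(2^{k+1}\mathrm{poly}(n))$ otherwise, which is legitimate in an FPT reduction), and observe that $|C\cap R|=k+1$ would force $C\subseteq R$, making every vertex of the independent set $R$ bad and contradicting $|R|\ge k+2$. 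This buys you a smaller gadget at the price of a case distinction outside the reduction; the paper's $F$ gadget keeps the reduction uniform and also pins down $c\in C$ explicitly, but both arguments are sound. Two cosmetic points: the set $T$ in your backward direction is never defined (you presumably mean the set of twins), and your construction, like the paper's, produces a bipartite $G'$ ($R$ versus $B\cup\{\text{twins}\}\cup\{c\}$), so it also supports the subsequent corollary about bipartite graphs.
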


\begin{proof}
Given $(G,k)$ where $G{{=}}(R{{\cup}} B,E)$ an instance of \textsc{Oddset} let  $(G',k')$ (see Figure \ref{fig:WOD_ODD}) be  an instance of \textsc{WOD Set Of Size At Least} $n'{-}k'$ such that $G'{=}(A {\cup} D {\cup} F {\cup} c$, $E_1 {\cup} E_2 {\cup} E_3)$, $n'{=}|R|{+}(k{+}2)|B|{+}(k{+}2){+}1$ and $k'{=}k{+}1$ where:\\
$\begin{array}{ll}
A{=} \{ a_u, u {\in} R\}&E_1 {=} \{ cf_i, 1 \leq i \leq k{+}2 \}\\
D{=} \{ d_{u,i}, u {\in} B, 1 \leq i \leq k{+}2 \} &E_2 {=} \{ ca_u, u {\in} R \}\\
F {=} \{ f_i, 1 \leq i \leq k{+}2 \} &E_3 {=} \{ a_ud_{v,i}, uv {\in} E, 1 \leq i \leq k{+}2 \}\\
\end{array}$\\

If $(G',k')$ is a positive instance of \textsc{WOD Set Of Size At Least} $n'{-}k'$, let $C{{\subseteq} }V(G')$ be the smallest set such that $|Odd(C)|{\ge} n'{-}k'$. $F$ is an independent set of size $k{{+}}2{>}k'$ so there exists $f{\in} F{\cap} Odd(C)$. Since $c$ is the unique neighbour of $f$, $c$ belongs to $C$. For every $u{\in} B$, the subset $D_u{=}\{d_{u,i}, 1{\le}  i{\le}  k{+}2\}$ is an independent set of size $k{+}2>k'$ whose neighbourhood is included in $A$, thus $\forall u{\in} B, D_u{\subseteq} Odd(C{\cap} A)$ so $D{\subseteq} Odd(C{\cap} A)$. Since $c{\in} C$ and $A{\subseteq} Odd(\{c\})$, by minimality of $C$, $D{\cap} C{=}\emptyset$. Let $R'{=}\{u{\in} R, a_u{\in} C\}$, of size $|R'|{=}|C{\cap} A|{=}|C|{-}1{\le} k$. Since $\forall u{\in} B, u{\in} Odd(R')\Leftrightarrow D_u{\subseteq} Odd(C)$,  $B{\subseteq} Odd(R')$ so $(G,k)$ is a positive instance of \textsc{Oddset}.

If $(G,k)$ is a positive instance of \textsc{Oddset}, there exists $R' {\subseteq}  R$, s.t. $|R'| {\le} k$ and $B{=}Odd(R')$. Let  $A' {=} \{a_u, u{\in} R'\}$, since $\forall u{\in} B, u{\in} Odd(R')\Leftrightarrow D_u{\subseteq} Odd(A')$, $D {\subseteq} Odd(A')$ so $D {\subseteq} Odd(A'{\cup} \{c\})$. Since $A{\cup} F$ is an independent set dominated by $c$, $(F{\cup} A{\cup} D){\setminus} A'{=}V(G')\setminus (A'{\cup} \{c\}){\subseteq} Odd(A'{\cup} \{c\})$. Moreover $|Odd(A'{\cup} \{c\})\ge n'{-}(k{+}1){=}n'{-}k'$, so $(G',k')$ is a positive instance of \textsc{WOD Set Of Size At Least} $n'{-}k'$. \hfill $\Box$
\end{proof}

\begin{figure}
	\centering
	\vspace{-1cm}
	\begin{psfrags}
	{\footnotesize
	\psfrag{F}{$\!F$}
	\psfrag{c}{$c$}
	\psfrag{A}{$\!A$}
	\psfrag{G}{$\bf G$}
	\psfrag{D}{$\!D$}
	\psfrag{A}[c]{$~A$}
\psfrag{B}[c]{$~B$}
	\psfrag{vdots}{\vdots}
	\includegraphics[width=5cm,height=3cm]{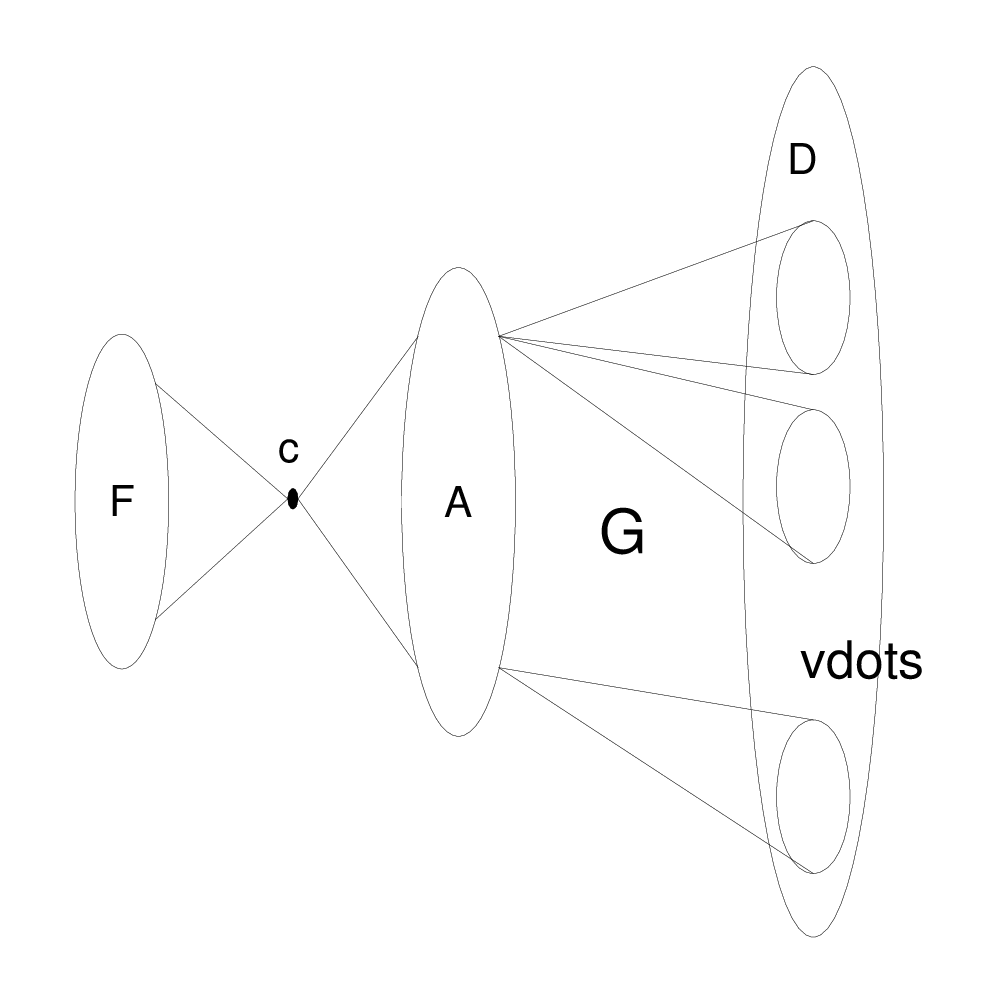}~~~~~~~~~~~~~~~~~\includegraphics[scale=0.3]{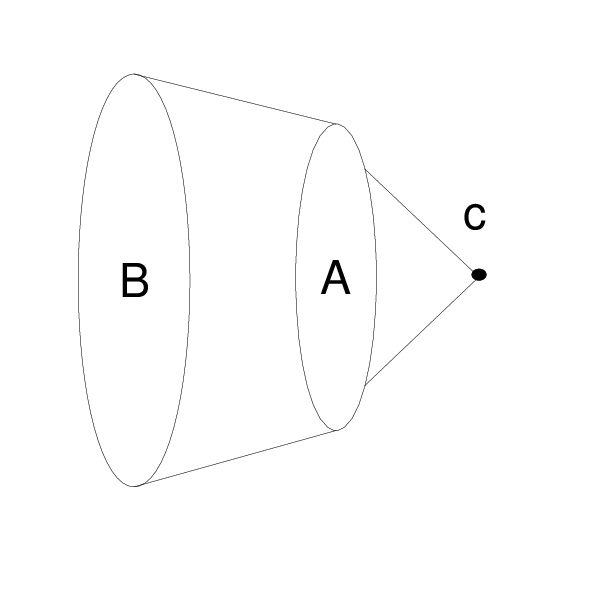}}
	\end{psfrags}
	\vspace{-0.3cm}
	\caption{	 \small\label{fig:WOD_ODD}Left: Reduction from \textsc{Oddset} to \textsc{WOD Set Of Size At Least} $n{-}k$; Right: Reduction from \textsc{WOD Set Of Size At Least} $n{-}k$ to \textsc{Non-WOD Set Of Size At Most} $k$\label{fig:kappa_kappap}\vspace{-0.3cm}} 
\end{figure}

Since \textsc{Oddset} is hard for \textup W$[1]$, so is \textsc{WOD Set Of Size At Least} $n{-}k$. Moreover, notice that the graph used in the proof of Theorem \ref{fig:WOD_ODD} is bipartite (see figure \ref{fig:WOD_ODD}), as a consequence:

\begin{coro}
\textsc{WOD Set Of Size At Least} $n{-}k$ is hard for \textup W$[1]$ even for bipartite graphs.
\end{coro}

\vspace{-0,7cm}

\subsubsection{3.2.2 Smallest non-WOD set problem}\label{sec:nWOD}
~\\
In this section we prove that \textsc{Non-WOD Set Of Size At Most} $k$ is hard for \textup W$[1]$ even for bipartite graphs.

\begin{theo}\label{thm:nonWOD_ODD}
\textsc{Non-WOD Set Of Size At Most} $k$ is harder than \textsc{WOD Set Of Size At Least} $n{-}k$ by
an \textup{FPT}-reduction.
\end{theo}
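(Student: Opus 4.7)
The plan is to exploit the duality established in Lemma \ref{lem:kappaPrimeKappaN}, which says that $\kappa'(H)+\kappa(\overline{H})=|V(H)|$ whenever $H$ has a universal vertex. Given an instance $(G,k)$ of \textsc{WOD Set Of Size At Least} $n-k$, with $|V(G)|=n$, I would build $(G',k')$ by taking $G'=\overline{G}+v$, where $v$ is a fresh vertex made adjacent to every vertex of $\overline{G}$, and by setting $k'=k+1$. The construction is polynomial in $n$ and the parameter grows only by $1$, so the map is an FPT reduction.

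For correctness, I would first apply Lemma \ref{lem:kappaPrimeKappaN} to $G'$, whose universal vertex is $v$, to obtain $\kappa'(G')+\kappa(\overline{G'})=n+1$. Next I would observe that $\overline{G'}$ is exactly $G$ together with the isolated vertex $v$: in $G'$ the vertex $v$ is joined to everything, so in $\overline{G'}$ it is joined to nothing, while the edges on the remaining $n$ vertices get complemented back to those of $G$. Since an isolated vertex has no neighbour it cannot lie in any WOD set, which gives $\kappa(\overline{G'})=\kappa(G)$, as already used in Section~\ref{sec:WOD}. Combining these two facts yields the key identity
\[
\kappa'(G')=n+1-\kappa(G),
\]
from which $\kappa(G)\ge n-k$ is equivalent to $\kappa'(G')\le k+1=k'$, as required.

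There is no real technical obstacle in this argument, since it is essentially a one-line corollary of the duality Lemma~\ref{lem:kappaPrimeKappaN}; the points that need a moment of care are simply (a) that $v$ is genuinely universal in $G'$ so that the lemma applies and (b) that complementation converts this universal vertex into an isolated one, so that discarding it leaves $\kappa$ unchanged. Both checks are immediate, so the main role of this theorem is to package the already available duality into the right FPT-reduction form.
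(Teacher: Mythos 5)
Your reduction is correct, but it is genuinely different from the one in the paper. The paper proves this theorem by a direct gadget construction: it adds to $G$ an independent set $A$ of $k+3$ vertices joined to all of $V$, plus an apex vertex $c$ joined to $A$, sets $k'=k+2$, and then argues by a case analysis on the parity of $|C|$ that a set $C$ with $|Odd(C)|\ge n-k$ in $G$ can be completed (by one vertex of $A$, and possibly $c$) into an odd-size $C'$ with $|C'\cup Odd(C')|\le k+2$ in $G'$, and conversely. Your route instead packages Lemma~\ref{lem:kappaPrimeKappaN} into the reduction: complement $G$, add a universal vertex $v$, set $k'=k+1$, and read off $\kappa'(G')=n+1-\kappa(\overline{G'})=n+1-\kappa(G)$ since $\overline{G'}$ is $G$ plus the isolated vertex $v$, which does not affect $\kappa$. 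The equivalence $\kappa(G)\ge n-k \iff \kappa'(G')\le k+1$ then follows by arithmetic, and the map is clearly polynomial with parameter depending only on $k$. Your argument is shorter, avoids the parity case analysis entirely, and uses a smaller construction ($n+1$ vertices and parameter $k+1$ versus $n+k+4$ vertices and parameter $k+2$); its only cost is that complementation destroys any sparsity or bipartite structure of $G$, but the paper's gadget does not preserve bipartiteness either (the bipartite case is handled by a separate reduction), so nothing is lost. Both proofs are valid; yours is arguably the more natural one given that Lemma~\ref{lem:kappaPrimeKappaN} has already been established earlier in the same section.
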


\begin{proof}
Given $(G,k)$ where $G=(V,E)$ and $n=|V|$ an instance of \textsc{WOD Set Of Size At Least} $n{-}k$, let $(G',k')$ (see Figure \ref{fig:kappa_kappap}) be an instance
of \textsc{Non-WOD Set Of Size At Most} $k'$ such that $G'=(A \cup B \cup \{c\}$, $E_1 \cup E_2 \cup E_3)$ and $k'=k+2$ where:\\
$\begin{array}{ll}
A= \{ a_i, 1 \leq i \leq k+3 \} & E_1= \{b_ub_v, uv\in E\}\\
B=\{b_u, u \in V\} & E_2= \{a_ic, 1 \leq i \leq k+3 \}\\
&E_3= \{b_ua_i, u \in V, 1 \leq i \leq k+3 \}\\
\end{array}$\\

If $(G,k)$ is a positive instance of \textsc{WOD Set Of Size At Least} $n{-}k$, there exists $C\subseteq V$ such that $|Odd(C)|\ge n{-}k$. There are two cases: if $|C| = 0 \bmod 2$, let $C'=\{b_u, u \in C\} \cup \{a\}$, where $a$ is any vertex in $A$. Since $A$ is connected by a complete bipartite graph to $B$, $A \subseteq Even(C')$ and $c \in Odd(C')$. Since $a\in C'$ is connected to all the vertices in $B$, for every $u\in Odd(C)$ in $G$, $b_u\notin Odd(C')$ in $G'$, so $|C'\cup Odd(C')|\le k+2 = k'$. Otherwise, if $|C| = 1\bmod 2$, let $C'=\{b_u,u\in C\} \cup \{a,c\}$, where $a$ is any vertex in $A$. $A$ is connected by a complete bipartite graph to $B$ and each vertex in $A$ is connected to $c$ since $|B\cap C'|= 1 \bmod 2$, $A\subseteq Even(C')$. Similarly $|Odd(C') \cup C'| \leq k+2 = k'$. Thus in both cases $(G',k')$ is a positive instance of \textsc{Non-WOD Set Of Size At Most} $k$.

\indent  If $(G',k')$ is a positive instance of \textsc{Non-WOD Set Of Size At Most} $k$, there exists $C'\subseteq V(G')$ such that $|C'\cup Odd(C')|\le k$ and $|C'|= 1 \bmod 2$. $A$ is an independent set of size $k+3>k'$, so there exists $a\in A$ such that $a\in Even(C')$. Since $A$ is connected to $V(G')\setminus A$ by a complete bipartite graph and $|C'|= 1 \bmod 2$, $|C'\cap A|= 1 \bmod 2$ then by minimality $|C'\cap A|=1$, let $a$ be this vertex. Let $C=\{u,b_u\in C'\}$, since $a$ is connected to every vertex in $B$, $\forall u\in V, b_u\in Odd(C')\Leftrightarrow u\in Even(C)$ so $|Even(C)|<k$ thus $(G,k)$ is a positive instance of \textsc{WOD Set Of Size At Least} $n{-}k$. \hfill $\Box$
\end{proof}

\begin{coro}\label{NONWOD:W1H}
\textsc{Non-WOD Set Of Size At Most} $k$ is hard for \textup W$[1]$.
\end{coro}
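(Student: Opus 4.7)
The plan is to simply chain together the two FPT-reductions that have been established, invoking the standard fact that FPT-reductions compose and preserve $W[1]$-hardness. By the preceding theorem, \textsc{Non-WOD Set Of Size At Most} $k$ is harder than \textsc{WOD Set Of Size At Least} $n-k$ under an FPT-reduction, and by the corollary following Theorem \ref{thm:WOD_ODD}, \textsc{WOD Set Of Size At Least} $n-k$ is itself $W[1]$-hard (obtained by FPT-reduction from \textsc{Oddset}).

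Concretely, given any $W[1]$-hard problem $\Pi$ that FPT-reduces to \textsc{WOD Set Of Size At Least} $n-k$ (for instance, \textsc{Oddset}), composing that reduction with the one constructed in the previous theorem yields an FPT-reduction from $\Pi$ to \textsc{Non-WOD Set Of Size At Most} $k$. Since the class of FPT-reductions is closed under composition — the composition of two functions computable in times $f_1(k)\cdot n^{O(1)}$ and $f_2(k)\cdot n^{O(1)}$, with parameter growth $k \mapsto g_1(k) \mapsto g_2(g_1(k))$, remains an FPT-reduction — we conclude that \textsc{Non-WOD Set Of Size At Most} $k$ inherits $W[1]$-hardness.

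There is essentially no obstacle here: the corollary is a one-line consequence of the two reduction theorems, and the only thing to check is that the parameter transformation $k' = k+2$ from the preceding reduction (and $k' = k+1$ from Theorem \ref{thm:WOD_ODD}) is a computable function of the parameter, which is obvious. Hence the result follows immediately by transitivity of FPT-reductions.
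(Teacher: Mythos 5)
Your proposal is correct and matches the paper's (implicit) argument exactly: the corollary follows by composing the FPT-reduction of the preceding theorem with the already-established $W[1]$-hardness of \textsc{WOD Set Of Size At Least} $n-k$, using transitivity of FPT-reductions. Nothing further is needed.
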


The proof of the \textup W$[1]$-hardness of \textsc{Non-WOD Set Of Size At Most} $k$ does not respect the bipartition of the graph. However we prove that the problem is \textup W$[1]$-hard even for bipartite graph by reduction from the general case:

\begin{theo}\label{thm:bip}
\textsc{Non-WOD Set Of Size At Most} $k$ in bipartite graphs is harder than \textsc{Non-WOD Set Of Size At Most} $k$ by an \textup{FPT}-reduction.
\end{theo}

\vspace{-0.3cm}

\begin{proof}
Given an instance $(G,k)$ of \textsc{Non-WOD Set Of Size At Most} $k$ let $(G',k')$ (see Figure) be a bipartite instance of \textsc{Non-WOD Set Of Size At Most} $k$ with:\\
$G'=( A \cup B_1 \cup B_2 \cup D \cup F \cup H,E_1 \cup E_2 \cup E_3 \cup E_4 \cup E_5)$, $k'=2k$\\
$\begin{array}{ll}
A= \{ a_u, u \in V \} & E_1= \{ a_ub_{i,v}, i \in \{1,2\}, uv \in E\}\\
B_1= \{ b_{1,u}, u \in V \} & E_2= \{ a_ub_{2,u}, u \in V\}\\
B_2= \{ b_{2,u}, u \in V \}& E_3= \{ b_{i,u}d_{i,u,j}, i \in \{1,2\}, \\
D= \{ d_{i,u,j}, i \in \{1,2\},  &\qquad\quad u \in V, 1 \leq j \leq 2k+1\}\\
\qquad\quad u \in V, 1\leq j \leq 2k+1 & E_4= \{ d_{i,u,j}f_{i,u,j,l}, i \in \{1,2\}, \}\\
F= \{ f_{i,u,j,l}, i \in \{1,2\},  &\qquad\quad u \in V, 1 \leq j,l \leq 2k+1\}\\
\qquad\quad u \in V, 1\leq j,l \leq 2k+1 \}& E_5= \{ f_{i,u,j,l}h_p, i \in \{1,2\}, \\
H= \{ h_i, 1 \leq i \leq 2k+1 \}&\qquad\quad u \in V, 1 \leq j,l,p \leq 2k+1\}\\
\end{array}$\\

	\begin{psfrags}
	\psfrag{A}{$\!A$}
	\psfrag{B1}{$\!B_1$}
	\psfrag{B2}{$\!B_2$}
	\psfrag{G}{$\bf G$}
	\psfrag{gum}{~~~$\bf G\cup M$}
	\psfrag{D}{$\!D$}
	\psfrag{F}{$\!F$}
	\psfrag{H}{$\!H$}
	\psfrag{vdots}{\vdots}
	\includegraphics[width=11cm,height=5cm]{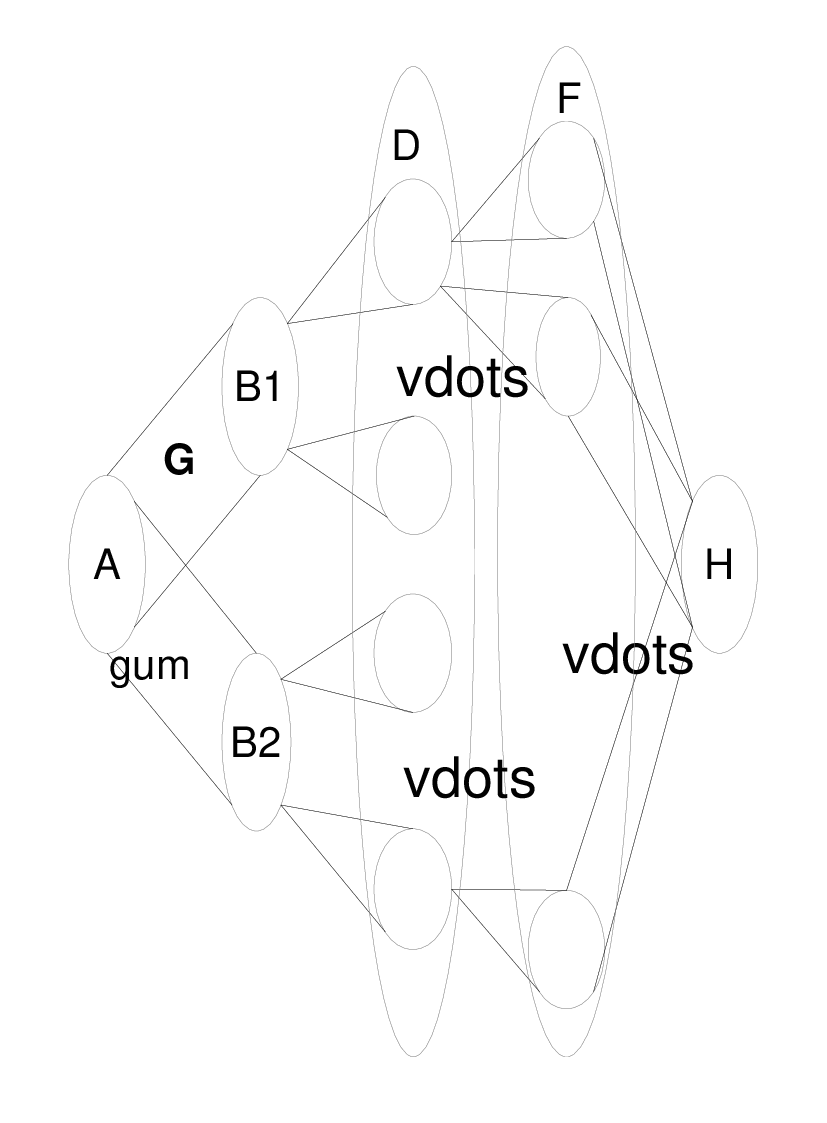}
	\end{psfrags}

\vspace{-0.2cm}
\noindent If $(G,k)$ is a positive instance of \textsc{Non-WOD Set Of Size At Most} $k$, there exists $C\subseteq V$ such that $|C \cup Odd(C)|\leq k$. Let $C'=\{a_u, u \in C\}$, notice that $|Odd(C')\cap B_1|=|Odd(C)|$ since $\forall u,v\in V,  a_ub_{1,v}\in E_1\Leftrightarrow uv\in E$. Moreover, $|Odd(C')\cap B_2|=|Odd(C){\oplus} C|$, since $\forall u,v\in V,\ a_ub_{2,v}\in E_1\Leftrightarrow uv\in E$ and $\forall u\in V, a_ub_{2,u}\in E_2$. So $|C'\cup Odd(C')|=|C|+|Odd(C)|+|Odd(C){\oplus} C|=2|C \cup Odd(C)|\leq 2k=k'$ thus $(G',k')$ is a positive  bipartite instance of \textsc{Non-WOD Set Of Size At Most} $k$.

\noindent If $(G',k')$ is a positive bipartite instance of \textsc{Non-WOD Set Of Size At Most} $k$, there exists $C'\subseteq V'$ such that $|C' \cup Odd(C')|\leq 2k$ and $|C'|= 1 \bmod 2$. Notice that $H$ is an independent set of size $2k+1>k'$, so there exists $h\in H$ such that $h\in Even(C')$, since $H$ is connected by a complete bipartite graph to $F$ thus $|F\cap C'|= 0 \bmod 2$. $F$ is composed of $(2k+1)2|V|>k'$ independent sets $F_i$ of size $2k+1>k'$ so $\forall i\in [1,(2k+1)2|V|] \exists f_i\in F_i$ such that $f_i\in Even(C')$. Since each $F_i$ is connected to a vertex $d\in D$ and connected to $H$ by a complete bipartite graph, $|F\cap C'|= 0 \bmod 2$ and $D\cap C'=\emptyset$. By minimality of $C'$ $|F\cap C'|= 0 \bmod 2$ implies $C'\cap F=\emptyset$. $D$ is composed of $2|V|>k'$ independent sets $D_i$ of size $2k+1>k'$, so there exists $i\in [1,2|V|]$ such that $D_i\subseteq Even(C')$, since each $D_i$ is connected to a vertex $b\in B_1\cup B_2$, $C'\cap (B_1\cup B_2)=\emptyset$. Thus $C'\subseteq A$ and $Odd(C')\subseteq B_1\cup B_2$, let $C=\{u\in V, a_u\in C'\}$, so $|C'\cup Odd(C')|=|C|+|Odd(C)|+|Odd(C){\oplus} C|=2|C \cup Odd(C)|$. Since $|C'\cup Odd(C')|\le k'$, $|C\cup Odd(C)|\le k$ which implies that $(G,k)$ is a positive instance of \textsc{Non-WOD Set Of Size At Most} $k$. \hfill $\Box$
\end{proof}

\begin{coro}
\textsc{Non-WOD Set Of Size At Most} $k$ is hard for \textup  W$[1]$ even for bipartite graphs.
\end{coro}


\subsubsection{3.2.3 Quantum Threshold problem}\label{sec:kappaq}
~\\
In this section we consider the quantum threshold problem. The quantum threshold $\kappa_Q(G)$ of a graph $G$ of order $n$ is defined as $\kappa_Q(G)=\max(\kappa(G),\kappa(\overline{G}))=\max(\kappa(G),n{-}\kappa'(G))$. Like the largest WOD set problem, this problem is proved to be hard for \textup W$[1]$ by a reduction from \textsc{Non-WOD Set Of Size At Most} $k$.

\begin{theo} \label{thm:quantumt}
\textsc{Quantum Threshold At Least} $n{-}k$ is harder than \textsc{Non-WOD Set Of Size At Most} $k$ by an
\textup{FPT}-reduction.
\end{theo}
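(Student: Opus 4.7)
My plan is to build the FPT reduction $(G,k)\mapsto(G',k')$ along the lines of Theorem~\ref{thm:WOD_ODD}. Specifically, let $G'=(V\cup A\cup\{c\},\,E\cup E_1\cup E_2)$, where $A=\{a_1,\ldots,a_{k+c_1}\}$ for a small constant $c_1$, $E_1=\{a_iv : v\in V,\,1\le i\le k+c_1\}$, $E_2=\{ca_i: 1\le i\le k+c_1\}$, and set $k'=k+c_2$ for a small constant $c_2$ to be tuned. The defining inequality will be $|A|>k'$, which lets pigeonhole control how any small witness can interact with $A$.

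For the forward direction I would start from an odd-sized witness $C\subseteq V$ with $|C\cup Odd_G(C)|\le k$ and form $C'=C\cup S$ for some correction $S\subseteq\{c\}\cup A$ of constant size, chosen to fix the parity of $|V\cap C'|+[c\in C']$. A direct computation then shows that $Odd_{G'}(C')$ reduces to $Odd_G(C)$ (or its complement within $V$) together with either all or none of $A$, so $C'$ witnesses either $\kappa'(G')\le k'$ or, dually via a Lemma~\ref{lem:kappaPrimeKappaN}-style complementation, $\kappa(G')\ge n'-k'$. In either case $\kappa_Q(G')\ge n'-k'$.

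For the reverse direction I would assume $\kappa_Q(G')\ge n'-k'$ and split by which side of the max attains the bound. In the sub-case $\kappa'(G')\le k'$, the inequality $|A|>k'$ forces any odd witness $C'$ to intersect $A$ in exactly one vertex (zero is excluded because otherwise all of $A$ would sit inside $Odd_{G'}(C')$, violating the size bound), and the parity of $|V\cap C'|+[c\in C']$ then pins down whether $c\in C'$; the restriction $C'\cap V$ is recognised as a non-WOD witness of $G$ of size at most $k$, paralleling the proof of Theorem~\ref{thm:WOD_ODD}. In the sub-case $\kappa(G')\ge n'-k'$, witnessed by a set $C'$ with $|V(G')\setminus Odd_{G'}(C')|\le k'$, the same pigeonhole now forces $A\setminus C'\subseteq Odd_{G'}(C')$; translating through the gadget via complementation yields, after an analogous parity analysis, a non-WOD witness of $G$ of size at most $k$.

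I expect the main obstacle to be the second sub-case of the reverse direction: the $\kappa$-side of $\kappa_Q$ a priori encodes only a WOD-type condition on $G'$, so the gadget must be engineered so that the forced inclusion $A\setminus C'\subseteq Odd_{G'}(C')$, together with the bound on $|V(G')\setminus Odd_{G'}(C')|$, realigns parities and reshapes the small complement into the structural form of a non-WOD witness of $G$. Calibrating $c_1,c_2$ so that both sub-cases yield the same parameter $k'$ is the delicate bookkeeping step; if a single pivot $c$ proves insufficient, a minor strengthening---such as an extra pivot or one universal-like vertex to invoke Lemma~\ref{lem:kappaPrimeKappaN} more directly---should suffice.
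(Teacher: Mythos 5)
Your construction does not match the paper's, and the part you flag as ``the main obstacle'' is in fact a fatal gap, not a bookkeeping issue. The problem is the $\kappa$-branch of $\kappa_Q(G')=\max(\kappa(G'),n'-\kappa'(G'))$: your gadget (an independent set $A$ joined to all of $V$ plus a pivot $c$) cannot prevent a large WOD set of $G'$ from arising out of a large WOD set of $G$ rather than out of a small non-WOD set of $G$. Concretely, take $G=K_n$ and $k=1$: then $\kappa'(K_n)=n$ (for any odd $C$, $Odd(C)=V\setminus C$), so $(G,k)$ is a negative instance; yet $C'=\{v\}$ for a single $v\in V$ gives $Odd_{G'}(C')=A\cup(V\setminus\{v\})$, whose complement is $\{v,c\}$ of size $2\le k'$ for any $k'=k+c_2$ with $c_2\ge 1$ (and the forward direction forces $c_2\ge 1$ to absorb the correction set $S$). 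So $\kappa(G')\ge n'-k'$ and the reduction answers yes on a no-instance. The underlying issue is that a large $Odd_{G'}(C')$ with $C'\subseteq V$ odd translates to $|C'\cup Even_G(C')|$ small, which is a statement about $\kappa(G)$ with odd witness, not about $\kappa'(G)$; no calibration of $c_1,c_2$ or extra pivots of the kind you describe repairs this.

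The paper sidesteps the whole difficulty with a different and much simpler device: it maps $(G,k)$ to $(G^{k+1},k)$, where $G^{k+1}$ is the disjoint union of $k+1$ copies of $G$. Because the copies are independent, $\kappa(G^{k+1})=(k+1)\kappa(G)$, and $(k+1)\kappa(G)\ge(k+1)n-k$ would force $\kappa(G)\ge n-\frac{k}{k+1}>n-1$, i.e.\ $\kappa(G)\ge n$, which is impossible; so the $\kappa$-branch of the max can never be the one attaining $\ge (k+1)n-k$. Meanwhile $\kappa'(G^{k+1})=\kappa'(G)$ (a smallest non-WOD set lives in one copy), so $\kappa_Q(G^{k+1})\ge (k+1)n-k$ holds if and only if $\kappa'(G)\le k$. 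If you want to salvage a gadget-style reduction you would need some analogous mechanism that provably rules out the $\kappa$-branch for every input graph, which your current construction does not provide.
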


\begin{proof}
Given $(G,k)$ an instance of \textsc{Non-WOD Set Of Size At Most} $k$ let $(G^{k+1},k)$ be an instance of 
\textsc{Quantum Threshold At Least} $n{-}k$ where  $G^p$ is the graph obtained by copying $p$ times $G$.
Since $\kappa_Q(G)=\max(\kappa(G),n-\kappa'(G))$ by Lemma 6 of \cite{weakodddomination}, there are two possibilities for $\kappa_Q(G^{k+1}) \geq (k+1)n{-}k$, either $\kappa(G^{k+1})$ or $n(k+1)-\kappa'(G^{k+1})$ is maximal:
\\$\bullet$ If $\kappa(G^{k+1}) \geq (k+1)n{-}k$, since the $k+1$ copies of $G$ are independent the largest WOD set of $G^{k+1}$ is $k+1$ copies of the largest WOD set of $G$, so $(k+1)\kappa(G) \geq (k+1)n - k$. Thus $\kappa(G) \geq n - \frac{k}{(k+1)}$, but $\frac{k}{k+1} < 1$ so $\kappa(G) \ge n$. On the other hand, for any graph $G$ of order $n$ we have $\kappa(G)<n$ so there is a contradiction.
\\$\bullet$ If $(k+1)n-\kappa'(G^{k+1}) \geq (k+1)n{-}k$. Since the $k+1$ copies are independent the smallest non-WOD set of $G^{k+1}$ is one of the smallest non-WOD set in a copy of $G$, so $\kappa'(G^{k+1}) \leq k\Leftrightarrow \kappa'(G) \leq k$ which is equivalent to $(G,k)$ is a positive instance of \textsc{Non-WOD Set Of Size At Most} $k$. \hfill $\Box$
\end{proof}

Notice that the reduction preserves the bipartition of the graph, so \textsc{Quantum Threshold At Least} $n{-}k$ is \textup W$[1]$-hard even for bipartite graphs.

\begin{coro}
\textsc{Quantum Threshold At Least} $n{-}k$ is hard for \textup W$[1]$ even for bipartite graphs.
\end{coro}

Now we have proved that all the problems related to weak odd domination are hard for \textup W$[1]$, by successive \textup{FPT}-reductions starting from \textsc{Oddset}. These problems are not only harder than \textsc{Oddset} but equivalent to \textsc{Oddset} and belong to \textup W$[2]$. Indeed, \textsc{Oddset} is harder than \textsc{Quantum Threshold Of Size At Least} $n{-}k$ by an \textup{FPT}-reduction:

\begin{theo}\label{thm:qodd}
\textsc{Oddset} is harder than \textsc{Quantum Threshold At Least} $n{-}k$ by an \textup{FPT}-reduction.
\end{theo}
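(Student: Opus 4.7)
The plan is to construct, from a \textsc{Quantum Threshold Of Size At Least} $n-k$ instance $(G,k)$, a bipartite \textsc{Oddset} instance $(G',k')$ of polynomial size with $k'=f(k)$ equivalent to it. Since $\kappa_Q(G)=\max(\kappa(G),n-\kappa'(G))$, the condition $\kappa_Q(G)\ge n-k$ is a disjunction: either (a) there is $C\subseteq V$ with $|C\cup Even(C)|\le k$, or (b) there is $C\subseteq V$ of odd size with $|C\cup Odd(C)|\le k$. Encoding this disjunction inside a single \textsc{Oddset} instance is the crux of the construction.

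The idea is to introduce a case-selector vertex $t\in R$ whose membership in the solution $R'$ switches between the two sub-cases. For every $v\in V$ I add: a candidate vertex $r_v\in R$ (marking $v\in C$), an exception vertex $s_v\in R$ (marking that the parity at $v$ is skipped), and a constraint vertex $b_v\in B$ with $N(b_v)=\{r_u:uv\in E\}\cup\{s_v,t\}$. Writing $C=\{v:r_v\in R'\}$, the parity requirement at $b_v$ reads $|N_G(v)\cap C|+[t\in R']+[s_v\in R']\equiv 1\pmod 2$, which forces non-exempted vertices into $Odd(C)$ if $t\notin R'$ (case a) and into $Even(C)$ if $t\in R'$ (case b). The parity condition on $|C|$ needed in case b is enforced by a separate constraint $q\in B$ adjacent to all the $r_v$'s, together with a padding of size $k+2$ in the spirit of Theorem~\ref{thm:WOD_ODD}, that imposes $|C|$ odd exactly when $t\in R'$ and is trivially satisfied otherwise.

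The delicate point is to keep the size bound tight: a vertex $v\in C$ whose own parity conflicts with the case selected by $t$ would naively cost two units in $|R'|$ (one for $r_v$ and one for $s_v$) instead of one, inflating the budget. I plan to absorb this over-count by adding, for every $v$, an auxiliary exception marker linked to $r_v$ through a small linear-algebraic gadget, so that the excess contribution of $C$ is bounded by $|C|\le k$; a choice $k'=O(k)$ then makes both directions of the equivalence go through simultaneously. This calibration of the padding is where I expect the main obstacle to lie, as the same $k'$ must serve for both directions and both cases. Combined with the reductions \textsc{Oddset}\,$\le_{FPT}$\,\textsc{WOD Set Of Size At Least} $n-k$\,$\le_{FPT}$\,\textsc{Non-WOD Set Of Size At Most} $k$\,$\le_{FPT}$\,\textsc{Quantum Threshold Of Size At Least} $n-k$ already established in Sections~\ref{sec:WOD}--\ref{sec:kappaq}, this theorem closes the loop and shows that the four problems are FPT-equivalent.
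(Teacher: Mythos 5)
There is a genuine gap, and it is exactly at the point you flag as ``delicate'' but then wave away: the calibration of $k'$ cannot be fixed by the construction you describe. In your encoding of case (a) (selector $t\notin R'$), the cheapest solution corresponding to a set $C$ costs $|C|+|\{v: |N(v)\cap C|\equiv 0 \pmod 2\}|$, which lies between $|C\cup Even(C)|$ and $2\,|C\cup Even(C)|$ and depends on $|C|$ itself. Consequently the forward direction forces you to take $k'\approx 2k$, while the backward direction with a budget of $k'$ only yields $|C\cup Even(C)|\le k'-|C|\le k'$, i.e.\ $\kappa(G)\ge n-2k$ rather than $\kappa(G)\ge n-k$; no single value of $k'$ makes both implications hold. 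The ``auxiliary exception marker linked through a small linear-algebraic gadget'' that is supposed to absorb the over-count is not specified, and it is precisely the missing idea: you need the cost of a solution to be an \emph{exact} function of $|C\cup Even(C)|$, not merely sandwiched between $|C\cup Even(C)|$ and twice that quantity. The paper achieves this with two complementary marker families per branch (one encoding $Even(C)$, one encoding $C\,\Delta\, Even(C)$, each tied to the target vertices by matchings so that the backward direction forces them to be exactly these sets), which makes the solution size exactly $2|C\cup Even(C)|+1$ in both directions and lets $k'=2k+1$ work. A secondary divergence: the paper does not encode the branch $\kappa'(G)\le k$ directly with an odd-cardinality constraint as you propose; it rewrites $n-\kappa'(G)$ in terms of $\kappa(\overline G)$ and runs a second, symmetric copy of the same gadget on the complement graph, with the selector pair $d_1,d_2$ and the common target $c$ choosing the branch. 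Your direct encoding of case (b) would additionally have to make the parity constraint $q$ switchable by $t$ and would suffer from the same exact-cost problem there ($|C|+|\{v:|N(v)\cap C| \text{ odd}\}|$ is again only sandwiched between $|C\cup Odd(C)|$ and $2|C\cup Odd(C)|$).

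In short: the selector idea and the disjunction decomposition are sound and match the paper in spirit, but the proof as proposed does not go through because the size accounting is not tight in either direction, and the one ingredient that would repair it is left as an unspecified gadget. To complete the argument you would need to duplicate the exception markers so that every vertex of $C\cup Even(C)$ (respectively $C\cup Odd(C)$) contributes the same fixed amount to $|R'|$ whether or not it lies in $C$, exactly as in the paper's sets $A_{i,1}$ and $A_{i,3}$, and as in the doubling trick already used in the bipartite reduction for \textsc{Non-WOD Set Of Size At Most} $k$.
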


\begin{proof}
Given an instance $(G,k)$ of \textsc{Quantum Threshold At Least} $n{-}k$, let $(G',k')$ (see Figure) be an instance of \textsc{Oddset} with:\\
$G'=(A \cup \{d_1\} \cup \{d_2\} \cup \{c\},E_1 \cup E_2 \cup E_3 \cup E_4 \cup E_5 \cup \{d_1c\} \cup \{d_2c\})$\\
$\begin{array}{l l l}
A= \bigcup \limits_{1\leq i \leq 2,1\leq j \leq 5} A_{i,j} & & E_1= \{ a_{1,2,u}a_{1,j,v}, j \in \{4,5\}, uv \in E \}\\
A_{i,j}= \{ a_{i,j,u}, u \in V \} & & E_2= \{ a_{2,2,u}a_{2,j,v}, j \in \{4,5\}, uv \notin E \}\\
& \quad & E_3= \{ a_{i,j,u}a_{i,l,u}, i {\in} \{1,2\}, j {\in} \{1,3\}, l {\in} \{4,5\}, u {\in} V \}\\
& & E_4= \{ d_ia_{i,j,u}, i \in \{1,2\}, j \in \{4,5\}, uv \in E \}\\
& & E_5= \{ a_{i,2,u}a_{i,5,u}, i \in \{1,2\}, u \in V \} \\
\end{array}$\\
And with $B=A_{i,4} \cup A_{i,5} \cup \{c\}$, $R=A_{i,1} \cup A_{i,2} \cup A_{i,3}$ and $k'=2k+1$.

	\begin{psfrags}\footnotesize
	\psfrag{A11}{$A_{1,1}$}
	\psfrag{A12}{$A_{1,2}$}
	\psfrag{A13}{$A_{1,3}$}
	\psfrag{A14}{$A_{1,4}$}
	\psfrag{A15}{$A_{1,5}$}
	\psfrag{A21}{$A_{2,1}$}
	\psfrag{A22}{$A_{2,2}$}
	\psfrag{A23}{$A_{2,3}$}
	\psfrag{A24}{$A_{2,4}$}
	\psfrag{A25}{$A_{2,5}$}
	\psfrag{d1}{$d_1$}
	\psfrag{d2}{$d_2$}
	\psfrag{c}{$c$}
	\psfrag{M}{$\bf M$}
	\psfrag{G}{$\bf G$}
	\psfrag{Gb}{$\bf \overline{G}$}
	\psfrag{Gbum}{$~~~\bf \overline{G}\cup M$}
	\psfrag{Gum}{$\bf G\cup M~$}
	\includegraphics[width=11cm,height=6cm]{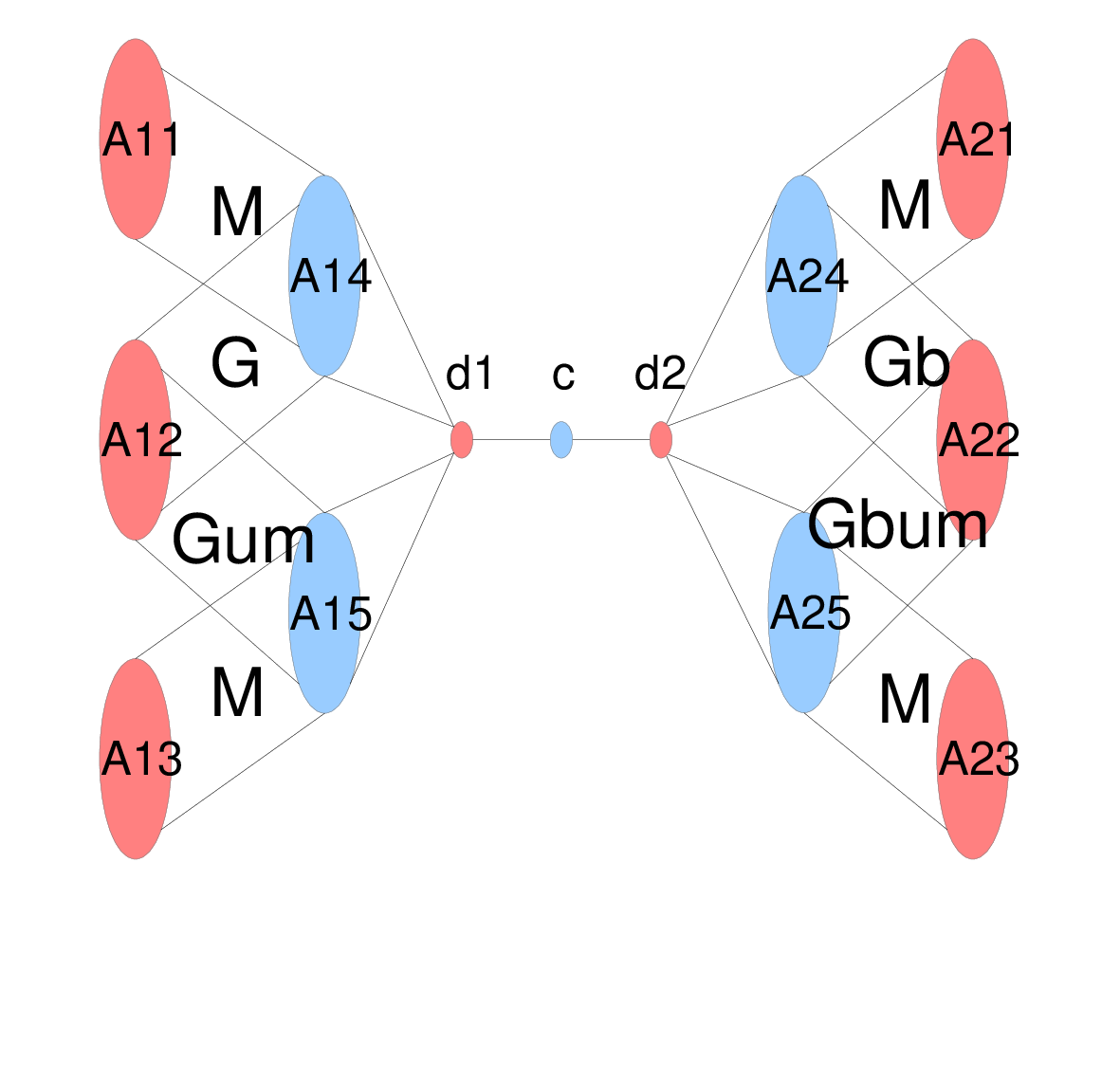}
	\end{psfrags}
	\vspace{-1.2cm}

\noindent If $(G,k)$ is a positive instance of \textsc{Quantum Threshold At Least} $n{-}k$, since $\kappa_Q(G){=}\max(\kappa(G),\kappa(\overline{G}))$,  either $\kappa(G)$ or $\kappa(\overline{G})$ is greater than $n{-}k$:\\

$\bullet$ If $\kappa(G)\geq n{-}k$, there exists $C$ such that $|Odd(C)| \geq n{-}k$. let $C'\subseteq R$ be $\{ a_{1,2,u}, u \in C \} \cup \{ a_{1,1,u},  u \in Even(C) \} \cup \{ a_{i,3,u},  u \in C{\oplus} Even(C) \} \cup \{d_2\}$ where ${\oplus}$ is the symmetric difference. Since $d_2 \in C'$, then $c$, $A_{2,4}$ and $A_{2,5}$ are in $Odd(C')$. Since $a_{1,2,u}a_{1,4,v}\in E_1\Leftrightarrow uv\in E$ and $\{ a_{1,2,u},  u \in C \}\subseteq C'$, then $\{ a_{1,4,u},  u \in Odd(C)\} \subseteq Odd(C')$. $A_{1,1}$ is connected to $A_{1,4}$ by a matching and $\{ a_{1,1,u},  u \in Even(C) \} \subseteq C'$ so $\{ a_{1,4,u},  u \in Even(C)\} \subseteq Odd(C')$ thus $A_{1,4} \subseteq Odd(C')$. Since $A_{1,5}$ is connected to $A_{1,2}$ like $A_{1,4}$ plus a matching, $\{ a_{1,5,u},  u \in Odd(C){\oplus} C\}\subseteq Odd(C'\cap A_{1,2})$ and $\{ a_{1,5,u},  u \in Even(c){\oplus} C\}\subseteq Odd(C'\cap A_{1,3})$, thus $A_{1,5} \subseteq Odd(C')$. So $B \subseteq Odd(C')$, and $|C'|=|C|+|Even(C)|+|Even(C){\oplus} C|+1=2|Even(C)\cup C|+1 \leq 2k+1=k'$, thus $(G',k')$ is a positive instance of \textsc{Oddset}. \\
$\bullet$  If $\kappa(\overline{G}){\geq} n{-}k$ then $d_1\in C'$ instead of $d_2$ and using the same process with neighbourhood relations of $\overline{G}$ instead of $G$ we obtain $(G',k')$ is a positive instance of \textsc{Oddset}.

\indent If $(G',k')$ is a positive instance of \textsc{Oddset}, there exists $C'\subseteq R$ such that $B\subseteq Odd(C')$. $c$ is dominated either by $d_1$ or $d_2$. If $d_2 \in C'$, then $A_{2,4}$ and $A_{2,5}$ are in $Odd(C')$. Since $A_{1,4}$ is connected to $A_{1,1}$ by a matching $A_{1,2}\cap C'\ne \emptyset$, let $C= \{u, a_{1,2,u} \in C' \}$ be a set of vertices in $V$, so similarly $|C'|=2|Even(C)\cup C|+1$ which implies that $|C'|=2|Odd(C)|+1$, so $\kappa(G)\ge n{-}k$ thus $(G,k)$ is a positive instance of \textsc{Quantum Threshold Of Size At Least} $n{-}k$. If $d_1 \in C'$, similarly $\kappa(\overline{G})\ge n{-}k$, so $(G,k)$ is a positive instance of \textsc{Quantum Threshold At Least} $n{-}k$. \hfill $\Box$
\end{proof}

\vspace{-0,4cm}

\begin{coro}
All the following problems: \textsc{WOD Set Of Size At Least} $n{-}k$, \textsc{Non-WOD Set Of Size At Most} $k$ and \textsc{Quantum Threshold At Least} $n{-}k$ even for bipartite graph are \textup{FPT}-equivalent to \textsc{Oddset} and thus are all hard for \textup W$[1]$ and in \textup W$[2]$.
\end{coro}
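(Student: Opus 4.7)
The plan is to derive the corollary directly from the four FPT-reductions proved in the preceding sections, which concatenate into a cycle
\textsc{Oddset} $\preceq$ \textsc{WOD Set Of Size At Least} $n-k$ $\preceq$ \textsc{Non-WOD Set Of Size At Most} $k$ $\preceq$ \textsc{Quantum Threshold Of Size At Least} $n-k$ $\preceq$ \textsc{Oddset}, where $\preceq$ denotes FPT-reducibility. Because FPT-reductions are closed under composition and transitive, the cycle immediately makes all four problems pairwise FPT-equivalent, which settles the statement for general graphs.

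For the bipartite refinement I would verify that the same cycle can be traversed while staying within the bipartite class. The backward direction (bipartite variant $\preceq$ \textsc{Oddset}) is free, since every bipartite input is a special case of a general one and the portion of the cycle ending with the last theorem of the paper reduces into \textsc{Oddset}, which is bipartite by definition. For the forward direction I would insert only bipartite-preserving links: the reduction of Theorem \ref{thm:WOD_ODD} already produces a bipartite graph (as noted in the remark following its proof); the reduction from \textsc{Non-WOD Set Of Size At Most} $k$ to \textsc{Quantum Threshold Of Size At Least} $n-k$ uses $k+1$ disjoint copies of the input and thus trivially preserves bipartiteness; and the dedicated bipartite reduction of Section \ref{sec:nWOD} funnels \textsc{Non-WOD Set Of Size At Most} $k$ into its bipartite restriction. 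Chaining these pieces starting from \textsc{Oddset} yields an FPT-reduction from \textsc{Oddset} into each of the three bipartite variants.

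The only work is the bookkeeping confirming that each step of the forward chain either preserves bipartiteness or is routed through the auxiliary bipartite reduction of Section \ref{sec:nWOD}; no new combinatorial content is required beyond the preceding theorems, so no substantive obstacle is expected.
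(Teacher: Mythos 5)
Your proposal is correct and follows essentially the same (implicit) argument as the paper: the corollary is obtained by chaining the cycle of FPT-reductions \textsc{Oddset} $\rightarrow$ \textsc{WOD Set Of Size At Least} $n-k$ $\rightarrow$ \textsc{Non-WOD Set Of Size At Most} $k$ $\rightarrow$ \textsc{Quantum Threshold Of Size At Least} $n-k$ $\rightarrow$ \textsc{Oddset} and using transitivity, with the bipartite cases handled exactly as you describe (the bipartite output of Theorem~\ref{thm:WOD_ODD}, the dedicated bipartite reduction for \textsc{Non-WOD Set Of Size At Most} $k$, and the bipartiteness-preserving disjoint-copies reduction). No further content is needed.
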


\vspace{-0.5cm}

\section{Approximability}\label{sec:approx}

Weak odd domination problems \textsc{WOD Set Of Size At Least} $n{-}k$, \textsc{non-WOD Set Of Size At Most} $k$, and \textsc{Quantum Threshold At Least} $n{-}k$ are not fixed parameter tractable unless \textup{FPT} = \textup W$[1]$ (see section \ref{sec:PC}).  In this section we consider the question of the existence of an approximation algorithm for optimisation problems associated with the weak odd domination:

\vspace{0.3cm}

\indent {\begin{tabular}{l}
 \textsc{Max WOD Set}\\
   input: A graph $G=(V,E)$\\
   cost function: $D\mapsto |Odd(D)|$\\
   output: $D{\subseteq} V$ s.t. $|Odd(D)| {=}\kappa(G)$~~~~\\
  \end{tabular}}
  
  \vspace{0.3cm}
  
  \indent {\begin{tabular}{l}
\textsc{Min non-WOD Set}\\
input: A graph $G=(V,E)$\\
cost function $D\mapsto |D\cup Odd(D)|$\\
 output: $D{\subseteq} V$ s.t. $|D\cup Odd(D)| {=}\kappa'(G)$\\
  \end{tabular}}

\vspace{0.3cm}
  

\indent {{\begin{tabular}{l}\textsc{Max non-Accessible Set}\\
   input: A graph $G=(V,E)$\\
   cost function $D\mapsto \max(|Odd_G(D)|,|Odd_{\overline G}(D)|)$\\
   {output: $D\subseteq V$ s.t. $\max(|Odd_G(D)|,|Odd_{\overline{G}}(D)|) =\kappa_Q(G)$} 
    \end{tabular}}}

~\\Notice that \textsc{Max non-Accessible Set} is in $\textup{APX}$ as it admits the $2$-approxima\-tion algorithm  which consists in outputting  any vertex of the graph. Indeed, given a graph $G{=}(V,E)$ of order $n$, $\forall v{\in} V$, $\kappa_Q(G){\ge} \max(|N(v)|, |V{\setminus} N(v)| {-}1)\ge\frac{n{-}1}2$. Since $\kappa_Q(G){\leq} n{-}1$, it produces a $2$-approximation. 

In the rest of the section, we prove the \textup{APX}-completeness of \textsc{Max WOD Set} and \textsc{Min non-WOD Set}, which implies the non-existence of polynomial approximation scheme for these problems unless \textup{P=NP}. 

\vspace{-0,3cm}

\subsection{Maximum WOD Set}

\begin{theo}\label{MaxWOD:Hard}
The \textsc{Max WOD Set} problem is complete for \textup{APX}.
\end{theo}

The proof of the harness consists in a reduction from \textsc{Max 3-Sat $B$} proved to be hard for $\textup{APX}$ in \cite{papa}, for any $B\in \mathbb N$,

\vspace{0.3cm}

\indent \textsc{Max 3-Sat $B$}\\
\indent input: a 3-CNF formula $\psi$ where every variables occurs at most $B$ times\\
\indent output: an assignment of variable of $\psi$ s.t. the maximum number of clauses\\\indent\qquad is satisfied\\

\begin{lemm}
For any  $B{\ge} 0$  \textsc{Max WOD Set} is harder than \textsc{Max 3-Sat $B$} by an $L$-reduction.
\end{lemm}

\begin{proof}
Given $\psi$ an instance of \textsc{Max 3-Sat $B$} with $n$ clauses, let $f$ such that $f(\psi)=G'$ (see Figure) be an instance of \textsc{Max WOD Set} such that $G'=(C\cup D\cup F,E_1\cup E_2)$ where:\\
$\begin{array}{l l}
C=\{a,\overline{a}|a$ is a variable of $\psi\} & E_1=\{ad_{i,a},ad_{i,ab},ad_{i,ac},\\
F=\{f_{a,j}|a$ is variable$ & \qquad\quad ad_{i,abc}|i\in [1,n]\}\\
\qquad\quad$ of $\psi,j\in[1,4B+1]\} & E_2=\{af_{a,j},\overline{a}f_{a,j}|\\
D_i=\{d_{i,a},d_{i,b},d_{i,c},d_{i,ab},d_{i,ac},d_{i,bc},d_{i,abc}|a,b,c & \qquad\quad j\in [1,4B+1],F_{i,j}\subseteq F\}\\
\qquad\quad$ are the literals of the $i^{th}$ clause of $\psi\}\\
D=\bigcup\limits_{i\in[1,n]} D_i\\
\end{array}$ 

\vspace{-0.2cm}
	
	\hspace{1cm}
	\begin{psfrags}
	\psfrag{ai}{$d_{i,a}$}
	\psfrag{bi}{$d_{i,b}$}
	\psfrag{ci}{$d_{i,c}$}
	\psfrag{abi}{$d_{i,ab}$}
	\psfrag{aci}{$d_{i,ac}$}
	\psfrag{bci}{$d_{i,bc}$}
	\psfrag{abci}{$d_{i,abc}$}
	\psfrag{a}{$a$}
	\psfrag{b}{$b$}
	\psfrag{c}{$c$}
	\psfrag{ab}{$\overline{a}$}
	\psfrag{bb}{$\overline{b}$}
	\psfrag{cb}{$\overline{c}$}
	\psfrag{F}{$\bf F$}
	\includegraphics[width=10cm,height=5cm]{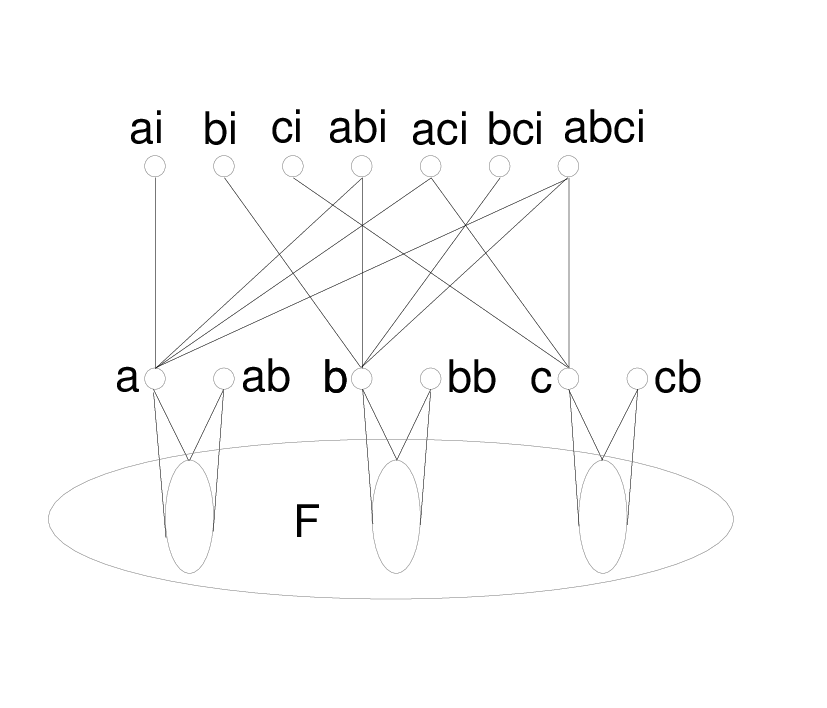}
	\end{psfrags}


Let $g$ such that given $X\subseteq V(G')$ a set of $G'$, $g(X)$ is an assignment of $\psi$ such that the variable $a$ is true if the vertex $a\in D$ (even if $\overline{a}\in D$) and is false otherwise. $g$ is computable in polynomial time and since $G'$ is polynomial in $|\psi|$, $f$ is computable in polynomial time.

In $G'$, $|C|\leq 6n$ and $|D|=7n$, and since each variable of $\psi$ is in a most $B$ clauses then $|F|\le 3n\times 4B+1$, so the size of $G'$ is linear in the size of $\psi$. Let $MaxSAT(\psi)$ be the value of the number of true clauses in the optimal solution of \textsc{Max 3-Sat $B$} on $\psi$, since $|G'|=\alpha|\psi|$ and that $MaxSAT(\psi)\ge \frac{|\psi|}{2}$, $\kappa(G')\le 2\alpha MaxSAT(\psi)$.

Given $C'\subseteq C$, let $(a\lor b\lor c)$ be the $i^{th}$ clause of $\psi$, $C'\cap \{a,b,c\}\ne\emptyset$ if and only if $|Odd(C')\cap D_i|=4$ since:\\
$\bullet$ If $|C'|=1$, let $a\in C'$ then $d_{i,a},d_{i,ab},d_{i,ac},d_{i,abc}\in Odd(C')$ and\\ $d_{i,b},d_{i,c},d_{i,bc}\in Even(C')$ and symmetrically for $b$ and $c$.
\\$\bullet$ If $|C'|=2$, let $a,b\in C'$ then $d_{i,a},d_{i,b},d_{i,ac},d_{i,bc}\in Odd(C')$ and\\ $d_{i,c},d_{i,ab},d_{i,abc}\in Even(C')$ and symmetrically for $b,c$ and $a,c$.
\\$\bullet$ If $|C'|=3$, let $a,b,c\in C'$ then $d_{i,a},d_{i,b},d_{i,c},d_{i,abc}\in Odd(C')$ and\\ $d_{i,ab},d_{i,ac},d_{i,bc}\in Even(C')$.

For all variable $a$ of $\psi$, there exists $F_a=\{f_{a,j}|j\in [1,4B+1]\}$ an independent set of size $4B+1$ connected only to $a$ and $\overline{a}$. Given $C'\subseteq C$, since $|Odd(C')\cap D_i|\leq 4$ and that a variable $a$ appears in at most $B$ clauses in $\psi$, if $a,\overline{a}\in C'$, $|Odd(C'\setminus\overline{a})|>|Odd(C')|$ thus there is no $a,\overline{a}$ pairs in the optimal solution of \textsc{Max WOD Set}.

Let $X_{opt}$ be the optimal solution  of \textsc{Max WOD Set}, since there is no $a,\overline{a}$ pairs in $X_{opt}$, then $F\subseteq Odd(X_{opt})$ and the number of $D_i$ dominated by $X_{opt}$ is $MaxSAT(\psi)$. Let $X$ be a solution of \textsc{Max WOD Set} and $l$ the number of satisfied clauses by $g(X)$. $MaxSAT(\psi)-l=k$ implies that $g(X)$ satisfies $k$ minus clauses than $g(X_{opt})$. Thus $X$ oddly dominates at least $k$ minus $D_i$ than $X_{opt}$ or contains $a,\overline{a}$, so $X$ oddly dominates at least $4k$ minus vertices than $X_{opt}$. This implies that $MaxSAT(\psi)-l\leq\beta(\kappa(G')-|Odd(X)|)$. \hfill $\Box$
\end{proof}

\noindent \textit{\bf Proof of Theorem \ref{MaxWOD:Hard}.} For all $B\geq 3$ \textsc{Max 3-Sat $B$} is hard for $\textup{APX}$ and there exists an $L$-reduction from \textsc{Max 3-Sat $B$} to \textsc{Max WOD Set}, thus \textsc{Max WOD Set} is also hard for $\textup{APX}$. The completeness is obtained by applying the conditional probabilities method \cite{Raghavan1988130} on the linear bound of $\kappa(G)$ given by Lemma \ref{lem:bound} which gives a polynomial 4-approximation of $\kappa$.


\begin{coro}
There exists $\epsilon>0$ such that there is no polynomial time $(1+\epsilon)$-approximation of \textsc{Max WOD Set} unless \textup {\textup{P=NP}}.
\end{coro}


\subsection{Minimum non-WOD Set}

\begin{theo}\label{MinNWOD:hard}
The \textsc{Min non-WOD Set} problem is complete for \textup{APX}.
\end{theo}

The proof of the hardness consists in a reduction from \textsc{Max WOD Set} proved to be hard for $\textup{APX}$ in Theorem \ref{MaxWOD:Hard}.

\begin{lemm}
\textsc{Min non-WOD Set} is harder than~\textsc{Max WOD Set}~by~an~$L$-reduction.
\end{lemm}

\begin{proof}
Given $G=(V,E)$ an instance of {Max WOD Set} of size $n$, let $f$ such that $f(G)=G'$ be an instance of \textsc{Min non-WOD Set} such that $G'=(V\cup d,E_1\cup E_2)$ where $E_1=\{uv| u\in V,v\in V,uv\notin E\}$ and  $E_2=\{ud| u\in V\}$. Let $g$ such that given $X\subseteq V(G')$ a non-WOD set of $G'$, $g(X)$ is a WOD set of $V(G)$ defined by $g(X)=\{u| u\notin X, u\ne d\}$. Since $X$ is a non-WOD set there exists $D\subseteq X$ such that $|D|= 1\bmod 2$ and $Odd(D)\subseteq X$, so $\overline{X}\subseteq Even(D)$, thus in $\overline{G'}$ $\overline{X}\subseteq Odd(D)$. Since $d$ is isolated in $\overline{G'}$, $\overline{X}\subseteq Odd(D)$ in $G$ thus $g(X)$ is a WOD set in $G$ and $|g(X)|=n-|X|$. $g$ is computable in polynomial time and since $G'$ is polynomial in $|\psi|$, $f$ is computable in polynomial time. 
By Lemma \ref{lem:stgdual}, $\kappa'(G')+\kappa(\overline{G'})= n+1$, since $\overline{G'}$ is $G$ and an isolated vertex and that an isolated vertex does not change the value of $\kappa$, then $\kappa'(G')=n+1-\kappa(G)$. By Lemma \ref{lem:bound}, $\kappa(G)\geq \frac{n}{4}$, thus $\kappa(G')\leq \alpha\kappa(G)$. 
By adding $|g(X)|=n-|X|$ to $\kappa'(G')=n+1-\kappa(G)$ we have that for all solution $X$ of \textsc{Min non-WOD Set}, $|\kappa(G)-|g(X)||\leq \alpha|\kappa'(G')-|X||$. \hfill $\Box$
\end{proof}

\noindent \textit{\bf Proof of Theorem \ref{MinNWOD:hard}.} \textsc{Max WOD Set} is hard for $\textup{APX}$ and there exists an $L$-reduction from \textsc{Max WOD Set} to \textsc{Min non-WOD Set}, thus \textsc{Min non-WOD Set} is also hard for $\textup{APX}$. Similarly to \textsc{Max WOD Set} the completeness is given by applying the conditional probabilities method \cite{Raghavan1988130} on Corollary \ref{cor:bound} which gives a polynomial 8-approximation of $\kappa'$.

\begin{coro}
The exits $\epsilon>0$ such that there is no polynomial time $(1+\epsilon)$-approximation of \textsc{Min non-WOD Set} unless \textup{\textup{P=NP}}.
\end{coro}

\section{Conclusion}

In this paper we have explored the parameterized complexity and approximation of weak odd domination problems: largest WOD set, smallest non-WOD set, quantum threshold, and their respective related quantities $\kappa$, $\kappa'$ and $\kappa_Q$. 
We have proved bounds on these quantities. These bounds imply that for any of these problems, there is a parameterization which is \textup{FPT}. However, if one considers a more natural parameterization of these problems, it turns out that all the variants of weak odd domination problems are hard for \textup W$[1]$ and in \textup W$[2]$ even for bipartite graphs. More precisely we show that they are equivalent to \textsc{Oddset}. Regarding the approximation we have proved that (1) maximum WOD set and minimum non-WOD set are complete for $\textup{APX}$, so they have no polynomial approximation scheme unless $\textup{P=NP}$; and (2) maximum non-accessible set is in \textup{APX} with a trivial constant factor approximation. 
The existence of a polynomial approximation scheme for maximum non-accessible set, which would be of great interest for the design of graph-based quantum secret sharing schemes, remains open.
Regarding the parameterized complexity, \emph{tight} bounds on $\kappa$, $\kappa'$ and $\kappa_Q$ would allow for above guarantee technics  to tackle the existence of \textup{FPT}-algo\-rithms only based on the existence of bounds on the parameter (see \cite{mahajan} details).

\vspace{-0.4cm}

\subsubsection*{Acknowledgements.}

The authors want to thank Sylvain Gravier, J\'er\^ome Javelle, and Mehdi Mhalla for fruitful discussions and the anonymous reviewer for his useful comments.  This work has been funded by the ANR-10-JCJC-0208 CausaQ grant and by R\'egion Rh\^one-Alpes.

\end{document}